\documentclass[runningheads,envcountsame]{llncs}
\usepackage{standalone}
\usepackage{graphicx} 
\usepackage{xspace}
\usepackage{amsmath}
\usepackage{amssymb}
\usepackage{enumitem}
\usepackage{nicefrac}
\usepackage[numbers,sort&compress]{natbib}
\usepackage{hyperref}
\usepackage{cleveref}
\usepackage{tikz}
\usepackage{thmtools}
\usepackage{hyperref}
\usepackage{cleveref}
\usepackage{apxproof}

\usetikzlibrary{arrows.meta, positioning, fit, backgrounds, calc}

\usepackage{xcolor}

\title{Interval number for tournaments \texorpdfstring{\\}{} in \texorpdfstring{$P_3$}{P3}-convexity}
\author{Idian C. Capozzoli\inst{1}\orcidID{0000-0002-8282-0701}, Yan S. Couto\inst{1}\orcidID{0009-0005-5850-8696} \and Enrique Junchaya\inst{1}\orcidID{0009-0003-1898-2102}}
\authorrunning{I.\,C. Capozzoli, Y.\,S. Couto and E.\,Junchaya}
\institute{University of São Paulo, São Paulo, SP, Brazil\\
\email{\{idian,yancouto,enriquejh\}@ime.usp.br}}

\usepackage{verbatim}
\usepackage{soul} 
\soulregister\cite7
\usepackage[commandnameprefix=ifneeded,authormarkup=none]{changes}
\definechangesauthor[color=orange]{yan}
\definechangesauthor[color=red]{idian}
\definechangesauthor[color=blue]{enrique}
\sethighlightmarkup{\IfIsColored{{\sethlcolor{authorcolor!30}\hl{#1}}}{#1}}

\newcommand{\Oh}{\mathcal{O}} 
\renewcommand{\mid}{\colon}

\newcommand{\cclass}[1]{\textbf{\textup{#1}}\xspace} 
\newcommand{\NP}{\cclass{NP}}
\renewcommand{\P}{\cclass{P}}
\newcommand{\W}[1]{\cclass{W[#1]}}
\newcommand{\FPT}{\cclass{FPT}}

\newcommand{\ETH}{\textbf{ETH}\xspace}

\usepackage[d]{esvect}
\renewcommand{\overrightarrow}[1]{\vv{#1}}

\DeclareMathOperator{\hn}{hn}

\DeclareMathOperator{\itn}{in}

\DeclareMathOperator{\gc}{g}
\DeclareMathOperator{\pc}{P_3}
\DeclareMathOperator{\psc}{P_3^*}
\DeclareMathOperator{\ogc}{\overrightarrow{g}}
\DeclareMathOperator{\opc}{\overrightarrow{P_3}}
\DeclareMathOperator{\opsc}{\overrightarrow{P_3^*}}
\DeclareMathOperator{\xc}{{\cal X}}
\DeclareMathOperator{\oxc}{\overrightarrow{\xc}}

\DeclareMathOperator{\ohn}{\overrightarrow{\hn}}

\DeclareMathOperator{\oin}{\overrightarrow{\itn}}

\DeclareMathOperator{\oinx}{\oin_{\xc}}

\DeclareMathOperator{\oing}{\oin_{\gc}}
\DeclareMathOperator{\ohng}{\ohn_{\gc}}

\DeclareMathOperator{\oinp}{\oin_{\pc}}

\DeclareMathOperator{\oinps}{\oin_{\psc}}
\DeclareMathOperator{\ohnps}{\ohn_{\psc}}

\DeclareMathOperator{\intfunc}{I}

\DeclareMathOperator{\oi}{\overrightarrow{\intfunc}}

\DeclareMathOperator{\oifp}{\oi_{\pc}}
\DeclareMathOperator{\oifps}{\oi_{\psc}}
\DeclareMathOperator{\oifx}{\oi_{\xc}}

\newcommand{\odn}{\gamma^+}

\usepackage{tabularx}
\usepackage{nameref}
\usepackage[skins,listings]{tcolorbox}
\newcommand{\pname}[1]{\textsc{#1}}
\newtcolorbox{ProblemBox}[2][]{
	colframe=black!20!white,
	coltitle=black,
	arc=1.5mm,
	boxrule=.15mm,
	colbacktitle=white,
	colback=white,
	enhanced,
	adjusted title=flush left,
	attach boxed title to top left={yshift=-3mm,xshift=3mm},
	boxed title style={colframe=white,righttitle=-3mm,lefttitle=-3mm},
	title=#2,#1
}
\newcommand{\problemDef}[4]{
	\begin{ProblemBox}[label={#4},nameref={#1}]{\pname{#1}}
		\begin{tabularx}{\textwidth}{l l} 
			\textnormal{Input:} & \quad \textnormal{#2} \\ 
			\textnormal{Question:} & \quad \textnormal{#3}
		\end{tabularx}
	\end{ProblemBox}
}

\newcommand{\paraProblemDef}[5]{
	\begin{ProblemBox}[label={#5},nameref={#1}]{\pname{#1}}
		\begin{tabularx}{\textwidth}{l l} 
			\textnormal{Input:} & \quad \textnormal{#2} \\ 
			\textnormal{Parameter:} & \quad \textnormal{#3}. \\ 
			\textnormal{Question:} & \quad \textnormal{#4}
		\end{tabularx}
	\end{ProblemBox}
}

\begin{document}

\maketitle

\begin{abstract}
    We study the complexity of determining the interval numbers of tournaments in the~$\opc$ and~$\opsc$ convexities, denoted by~$\oinp(T)$ and~$\oinps(T)$ on a tournament~$T$.

    For each~$\oxc \in \{\opc, \opsc\}$, we show that determining whether~\(\oinx(T) \leq k\) is~\W2-complete when parameterized by~$k$. Moreover, under \ETH, we show that there is no parameterized algorithm for that problem with running time~\(f(k)\, n^{o(k)}\) on an~$n$-vertex tournament, where~$f$ is any computable function.

    For the~$\opc$-convexity, we also show that~$\oinp(T) = \Oh(\log n)$, which yields a simple quasi-polynomial~$n^{\Oh(\log n)}$ brute-force algorithm. On the other hand, under \ETH, we show that the problem is \NP-intermediate, that is, it is neither \NP-hard nor in~\P.

    For the~\(\opsc\)-convexity, the same brute force algorithm is not quasi-polynomial, since we present a family of instances with~\(\oinps(T) = \Theta(n)\). We conjecture that this problem is~\NP-complete.
\end{abstract}
\keywords{Graph convexity, Tournaments, Interval number, Parameterized complexity}

\section{Introduction}

Convexity has long been studied in different areas of mathematics, including in graphs, and even the class of tournaments~\cite{varlet_convexity_1976}.
We start by briefly introducing the relevant definitions of convexity in oriented graphs, which is our focus.

\subsection{Convexity and oriented graphs}

We use the graph notation of \citet{bondy_graph_2008}. A directed graph~$D$ consists of a set of vertices~$V(D)$ and arcs~$A(D) \subseteq V(D) \times V(D)$ between them, where an arc from~$u$ to~$v$ is denoted as~$uv$. An orientation is a directed graph with no opposite arcs ($uv$ and~$vu$), that is, it is equivalent to an undirected graph together with an orientation of its edges. A tournament is the orientation of a complete graph.
For an orientation $D$ and a vertex $v \in V(D)$, we denote by $N^+(v)$ the set of vertices of $D$ that are the heads of the outgoing arcs of $v$, and define~$N^-(v)$ analogously.

An \emph{interval convexity}~\(\oxc\) over an orientation~$D$ is given by a function~$\oifx: 2^{V(D)} \to 2^{V(D)}$ such that, for any~$U \subseteq V(D)$, we have that~$U \subseteq \oifx(U)$ and
\[
\oifx(U) = \begin{cases}
  U, & \text{if } |U| < 2, \\
  \bigcup\limits_{\{u,v\} \subseteq U}{\oifx(\{u,v\})},  & \text{if } |U| \geq 2.
\end{cases}
\]

Note that~$\oifx$ can be totally defined by the value it returns for pairs of vertices. We are interested in two specific types of convexity:
\begin{itemize}
    \item In \emph{two-path convexity}, or~\(\opc\)-convexity, the function~$\oifp$ returns, for each pair of vertices~$\{u,v\}$, all vertices in any path of length two from~$u$ to~$v$ or from~$v$ to~$u$;
    \item In \emph{distance-two convexity}, or~\(\opsc\)-convexity, the function~$\oifps$ returns, for each pair of vertices~$\{u,v\}$, all vertices in any \emph{shortest} path of length two from~$u$ to~$v$ or from~$v$ to~$u$.
\end{itemize}

For any convexity~$\oxc$, a set~$U$ is said to be \emph{convex} if~$\oifx(U) = U$, and we say~$U$ is an \(\oxc\)-\emph{interval set} if~$\oifx(U) = V(D)$. The interval number of~$D$ in relation to~\(\oxc\), denoted~\(\oinx(D)\), is the size of the smallest \(\oxc\)-interval set of~$D$.

\subsection{Our results and organization}

Our results focus on determining the interval number of a tournament~$T$, in convexities~\(\opc\) and~\(\opsc\).
We start in~\Cref{sec:prelim} by reviewing relevant results in complexity.

In~\Cref{sec:opc} we consider deciding whether~\(\oinp(T) \leq k\). We present a quasi-polynomial algorithm for that problem in~\Cref{subsec:oinp-alg}, which is based on the fact that a tournament always has a~\(\opc\)-interval set of logarithmic size. Then we consider the version of the problem parameterized by~$k$. In~\Cref{subsec:oinp-w2-hard,subsec:oinp-w2}, we show that this version is both~\W2-hard and in~\W2, concluding it is~\mbox{\W2-complete}, and also showing it is~\NP-intermediate assuming \ETH.

In~\Cref{sec:opsc} we consider deciding whether~\(\oinps(T) \leq k\). We start in~\Cref{subsec:oinps-n} by recording a reduction of this problem to strongly connected components.
Then, we present a family of strongly connected tournaments for which the minimum~\(\opsc\)-interval set has linear size. This result suggests that there might not be a quasi-polynomial algorithm for this problem. On the other hand, we consider the version of the problem parameterized by~$k$, and in~\Cref{subsec:oinps-w2-hard,subsec:oinps-w2} we show it is both~\W2-hard and in~\W2, concluding it is~\W2-complete.

We conclude in~\Cref{sec:conclusion}, examining further research areas.

\subsection{Related work}

Convexity in oriented graphs, and even in tournaments, has been studied since the 70s~\cite{erdos_remarks_1972,varlet_convexity_1976}. However, work on undirected graphs~\cite{everett_hull_1985} received more attention than its directed counterparts~\cite{chartrand_hull_2003} until more recently.

A \emph{hull set} is a set for which its convex hull\footnote{The convex hull of~$U$ is the smallest superset of~$U$ which is convex, or~$I^{|V(G)|}(U)$.} is the whole graph. The \emph{hull number} of a graph is the size of its smallest hull set, and the problem of determining the hull number has been studied in directed and undirected graphs~\cite{everett_hull_1985,chartrand_hull_2003}. Both hull and interval numbers have been studied in several subclasses of directed graphs~\cite{araujo_hull_2022,araujo_hull_2026,medeiros_convexidade_2024}, such as tournaments, underlying bipartite, split and cobipartite graphs.

In fact, it has been shown that determining~\(\oinp\) or~\(\oinps\) is~\NP-hard even when the underlying graph is chordal bipartite by~\citet{araujo_hull_2026}, who also cite as an open problem (Problem 32) whether one can compute~\(\oinp\) for a tournament in polynomial time. A PhD Thesis in Portuguese~\cite{medeiros_convexidade_2024} presents partial answers to that problem, by showing that for~\(\oinp\) \ETH implies it has no polynomial algorithms, and for~\(\oinps\) it is~\W2-hard. In comparison, for~\(\oinp\), we actually show it is \W2-complete. For~\(\oinps\), we present a much simpler proof of~\W2-hardness, as well as showing it is in~\W2.

\section{Preliminaries}\label{sec:prelim}

\subsection{Complexity}

Set \P contains decision problems which admit a polynomial time solution, while set~\NP contains decision problems which admit a \emph{nondeterministic} polynomial time solution. A problem~$\pi$ is \NP-hard if all problems in \NP have a polynomial time reduction to~\(\pi\), and it is~\NP-complete if~\(\pi \in \NP\) and~\(\pi\) is \NP-hard.

The classic problem \textsc{3-SAT} is \NP-complete, and the \emph{Exponential Time Hypothesis} (\ETH), stated by~\citet{impagliazzo_complexity_2001}, says that any algorithm for \textsc{3-SAT} must take~$2^{\Omega(n)}$ time, where~$n$ is the size of the input formula.

\subsection{Parameterized complexity}

While a problem may be~\NP-complete, it might be possible to design efficient solutions for it if the input is restricted, and that is the topic of parameterized complexity.
A parameterized problem is a pair~$(Q, k)$ where~$Q$ is a problem and~$k$ is its parameterization. Problem~$(Q,k)$ is in \FPT if on an input of size~$n$ it admits a solution running in time~\(f(k)\, n^{\Oh(1)}\), for some computable function~$f$, and note that the polynomial on~$n$ does not depend on~$k$. We refer the reader to the textbook of~\citet{cygan_parameterized_2015} for a detailed reference of the subject.

\subsubsection{Parameterized hardness}\label{subsec:param-hard}

The analogue for \NP-hard problems in parameterized complexity are $\W{t}$-hard problems, that is, a problem is $\W{t}$-hard if any problem in~$\W{t}$ has a parameterized reduction to it and, while~$\FPT \subseteq \W{t}$, it is believed that~$\FPT \neq \W{t}$ for any~\(t \geq 1\). We now present the formal definition of~$\W{t}$. 

A \emph{Boolean circuit} is an acyclic directed graph with \emph{input nodes} (of indegree zero), \emph{negation nodes} (of indegree one), and \emph{and-nodes} and \emph{or-nodes} (of indegree at least two).
Additionally, exactly one node is an \emph{output node}.
A \emph{satisfying assignment} is an assignment of $0/1$ values to the input nodes such that the circuit computes (in the obvious manner) the value $1$ in the output node. 
The \emph{weight} of an assignment is the number of input nodes that are assigned value~$1$.

\paraProblemDef{\pname{Weighted circuit satisfiability}}{Boolean circuit $C$ and positive integer $k$.}{$k$}{Does $C$ admit a satisfying assignment of weight $k$?}{pro:weightsat}

The \emph{depth} of the circuit is the longest path from an input node to the output node.
\emph{Large nodes} in the circuit are nodes with indegree greater than two.
The \emph{weft} of the circuit is the maximum number of large nodes in a path from an input node to the output node.

By definition, a problem $\pi$ is in $\W{t}$ if there is parameterized reduction from $\pi$ to \nameref{pro:weightsat} restricted to circuits of constant depth and weft at most~$t$. It is clear that~\(\W{t} \subseteq \W{t+1}\).

Deciding if a graph has a clique of size~$k$, parameterized by~$k$, is the classic~\W1-complete problem, while deciding if a graph has dominating set of size at most~$k$, parameterized by~$k$, is the classic~\W2-complete problem.


A dominating set of an orientation~$D$ is a set~$S \subseteq V(D)$ such that each vertex of~$D$ either is in~$S$ or is the head of an arc outgoing from a vertex in~$S$. The \emph{domination number} of~$D$, denoted~\(\odn(D)\), is the smallest size of a dominating set of~$D$. Consider the following parameterized problem.

\paraProblemDef{\pname{Parameterized Dominating Set on Tournaments}}{Tournament graph $T$ and positive integer $k$.}{$k$}{Is $\odn(G)\leq k$?}{pro:TDSparam}

\citet{cygan_parameterized_2015} shows that~\nameref{pro:TDSparam} is \W2-complete.
Moreover, the following result follows directly from their Observation 14.22 and Corollary 14.23, and we use it to obtain strong lower bounds for our problems.

\begin{lemma}[\citet{cygan_parameterized_2015}]\label{res:eth-param}
    If there exists a parameterized reduction from~\nameref{pro:TDSparam} to a parameterized problem~$(Q,k)$, then~$(Q,k)$ is \W2-hard. Moreover, if the reduction has a linear parameter dependence then, for every computable~$f$, there is no~$f(k)\,n^{o(k)}$ time algorithm for~$(Q,k)$, unless~\ETH is falsified.
\end{lemma}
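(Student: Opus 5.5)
The lemma combines standard facts from \citet{cygan_parameterized_2015}, so the plan is to chain them in two steps.

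\textbf{$\W2$-hardness.} We use that parameterized reductions compose and that \nameref{pro:TDSparam} is $\W2$-complete. Take any problem $\pi \in \W2$. It has a parameterized reduction to \nameref{pro:TDSparam}. Composing it with the assumed reduction to $(Q,k)$ gives a parameterized reduction from $\pi$ to $(Q,k)$. To confirm the composition is valid, we check the three defining properties:
\begin{itemize}
\item the running time is $f(k)\,n^{\Oh(1)}$, with the output size bounded by the running time;
\item the new parameter is bounded by a computable function of the old one;
\item equivalence of instances is preserved.
\end{itemize}
Hence $(Q,k)$ is $\W2$-hard.

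\textbf{ETH lower bound.} We rely on the fact, behind Observation 14.22 and Corollary 14.23, that under \ETH \pname{Dominating Set} admits no $f(k)\,n^{o(k)}$ algorithm. This bound transfers to tournaments through the linear-parameter reduction they use, since the tournament construction produces $n^{\Oh(1)}$ vertices and parameter $\Oh(k)$.

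Suppose, for contradiction, that $(Q,k)$ has an algorithm running in time $f(k)\,n^{o(k)}$. Let the given reduction run in time $g(k)\,n^{\Oh(1)}$ and produce an instance of size $n' \le g(k)\,n^{c}$ with parameter $k' \le ak+b$. Running the algorithm on the reduced instance takes
\[
f(ak+b)\,\bigl(g(k)\,n^{c}\bigr)^{o(ak+b)}.
\]
This expression is at most $f'(k)\,n^{o(k)}$ for a computable $f'$, because $c$ is constant and $o(ak+b) = o(k)$. The result is an algorithm for \nameref{pro:TDSparam} that contradicts \ETH.

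\textbf{Main obstacle.} The only delicate step is absorbing the factor $g(k)^{o(k)}$ into the computable function. To handle it, we formalize $n^{o(k)}$ as $n^{k/s(k)}$ for some nondecreasing unbounded computable $s$, following the textbook's convention. With that convention, $g(k)^{k/s(k)}$ is a computable function of $k$ and can be absorbed into $f'(k)$.
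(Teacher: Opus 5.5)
Your proposal is correct and follows the paper's route: the paper gives no argument beyond citing the $\W2$-completeness of \nameref{pro:TDSparam} (plus transitivity of parameterized reductions) and Observation 14.22 and Corollary 14.23 of \citet{cygan_parameterized_2015}, and your absorption of $g(k)^{k/s(k)}$ into the computable factor is what Observation 14.22 packages. One small inaccuracy, which does not affect the conclusion: the usual reduction from \pname{Dominating Set} to tournaments embeds a tournament with no dominating set of size about $k$, which must have $2^{\Omega(k)}$ vertices, so its output has $g(k)\,n^{\Oh(1)}$ rather than $n^{\Oh(1)}$ vertices; your transfer computation already allows output size $g(k)\,n^{c}$, so nothing breaks.
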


\begin{lemma}\label{res:eth-fpt-neq-w1}
    \(\FPT \neq \W1\), unless~\ETH is falsified.
\end{lemma}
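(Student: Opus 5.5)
The plan is to derive the lemma from the standard link between \ETH and the \textsc{Clique} problem, via a contrapositive argument. Suppose, towards a contradiction, that $\FPT = \W1$.

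\textbf{Step 1: an FPT algorithm for \textsc{Clique}.} Since \textsc{Clique} parameterized by the solution size $k$ is \W1-complete (as recalled in \Cref{subsec:param-hard}), it lies in \W1. Under our assumption it is therefore in \FPT. Hence there is an algorithm deciding whether an $n$-vertex graph has a clique of size $k$ in time $f(k)\,n^{c}$, for some computable $f$ and constant $c$.

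\textbf{Step 2: contradiction with the known ETH lower bound.} Invoke the \textsc{Clique} analogue of \Cref{res:eth-param}, which is also in \citet{cygan_parameterized_2015} (Theorem 14.21). It states that, unless \ETH fails, no algorithm decides \textsc{Clique} in time $f(k)\,n^{o(k)}$ for any computable $f$. Since $n^{c} = n^{o(k)}$ as $k \to \infty$, the algorithm from Step 1 would falsify \ETH.

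An alternative route avoids \textsc{Clique} and stays within the paper's own setting. The assumption $\FPT = \W1$ also yields $\FPT = \W2$, because \W2 reduces to \W1 under that collapse. Applying this to \nameref{pro:TDSparam} gives an $f(k)\,n^{\Oh(1)}$ algorithm for it. Feeding the identity reduction into \Cref{res:eth-param} then contradicts \ETH in the same way.

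\textbf{Main obstacle.} The step needing care is that the collapse $\FPT=\W1$ does \emph{not} by itself imply $\FPT=\W2$. For this reason I would use the \textsc{Clique} route rather than the alternative. With that choice, the only remaining point is the asymptotic check that a constant exponent is $o(k)$, which is immediate. The whole argument is therefore short and relies entirely on the cited textbook result.
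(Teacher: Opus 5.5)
Your main argument is correct and is exactly the paper's proof: \textsc{Clique} is \W1-complete, so $\FPT=\W1$ would put it in \FPT, which contradicts the \ETH-based $f(k)\,n^{o(k)}$ lower bound of Theorem 14.21 of \citet{cygan_parameterized_2015}. You are right to reject the alternative route, since $\FPT=\W1$ is not known to imply $\FPT=\W2$ (only the converse follows from $\W1\subseteq\W2$), so that paragraph should be deleted rather than stated and then retracted.
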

\begin{proof}
    \ETH implies that~\textsc{Clique} is not in \FPT by~\cite[Theorem 14.21]{cygan_parameterized_2015}. Moreover,~\textsc{Clique} is~\W1-complete by~\cite[Theorem 13.25]{cygan_parameterized_2015}. Thus,~\ETH implies~$\FPT \neq \W1$.\qed 
\end{proof}

\section{Results for \texorpdfstring{$\opc$}{P3}-convexity}\label{sec:opc}

We formally define the discussed problems below, and provide upper and lower bounds for its classical and parameterized variants.

\problemDef{\pname{$\opc$-interval on Tournaments}}{Tournament graph $T$ and positive integer $k$.}{Is $\oinp(T)\leq k$?}{pro:tinp}

\paraProblemDef{\pname{Parameterized $\opc$-interval on Tournaments}}{Tournament graph $T$ and positive integer $k$.}{$k$}{Is $\oinp(T) \leq k$?}{pro:tinpparam}

\subsection{Classical complexity upper bounds}\label{subsec:oinp-alg}

We show that \nameref{pro:tinp} admits a quasi-polynomial algorithm.
This follows from the fact that the size of the $\opc$-interval set of a tournament $T$ is at most logarithmic on the number of vertices of $T$.
That fact has been proven by Medeiros~\cite{medeiros_convexidade_2024} in his PhD thesis.
\begin{restatable}{lemma}{inpLogBound}\label{res:inp-log-bound}%
    If \(T\) is a tournament with \(n \geq 2\) vertices, then \(\oinp(T) \leq \lceil 2\log_{\frac43} n \rceil.\)
\end{restatable}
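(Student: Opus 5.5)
The plan is a greedy construction driven by a halving-type covering lemma. Let $S$ be the set built so far, and call a vertex \emph{uncovered} if it lies neither in $S$ nor in $\oifp(S)$. Since $\oifp(S)=\bigcup_{\{u,v\}\subseteq S}\oifp(\{u,v\})$, adding vertices to $S$ never uncovers anything. So it suffices to show the following: whenever the set $R$ of uncovered vertices has size $r\geq 2$, we can add at most two vertices to $S$ so that at most $\frac34 r$ vertices of $R$ remain uncovered. Starting from $S=\emptyset$ and $R=V(T)$, after $t$ rounds at most $(\frac34)^t n$ vertices are uncovered. After $\lceil\log_{4/3} n\rceil$ rounds, at most one vertex remains uncovered, and a final round (or a rounding argument that absorbs it) finishes. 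Each round costs two vertices, which gives $|S|\leq \lceil 2\log_{4/3} n\rceil$ after checking the rounding at the end, including the case $n=2$ where $S=V(T)$ already works.

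The core is the covering lemma, which I will prove inside the subtournament $T[R]$ only. The new pair $\{u,v\}$ covers itself and every $w\in R$ with $u\to w\to v$ or $v\to w\to u$. Fix $u\in R$ and let $A=N^+(u)\cap R$ and $B=N^-(u)\cap R$, so that $|A|+|B|=r-1$. For $v\in A$, the vertices covered by $\{u,v\}$ are:
\begin{itemize}
\item the in-neighbours of $v$ inside $A$, via the paths $u\to w\to v$;
\item the out-neighbours of $v$ inside $B$, via the paths $v\to w\to u$.
\end{itemize}
Symmetrically, for $v\in B$ the covered vertices are the out-neighbours of $v$ in $B$ and the in-neighbours of $v$ in $A$.

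I will average over $v$. Inside $A$, the in-degrees sum to $\binom{|A|}{2}$, so a typical $v\in A$ gets about $|A|/2$ vertices of $A$, and similarly a typical $v\in B$ gets about $|B|/2$ vertices of $B$. The cross terms depend on how the arcs between $A$ and $B$ are oriented. Let $e(A,B)$ denote the number of arcs from $A$ to $B$, and define $e(B,A)$ likewise. Choosing $v\in A$ gains on average $e(A,B)/|A|$ cross vertices, while choosing $v\in B$ gains on average $e(B,A)/|B|$.

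Since $e(A,B)+e(B,A)=|A||B|$, one of the two choices gains at least about a quarter of the other side. I then want to combine the two estimates, counting $u$ and $v$ themselves, into at least $r/4$ covered vertices.

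The main obstacle is precisely this case analysis when $A$ and $B$ are very unbalanced. I would first try choosing $u$ with out-degree in $T[R]$ as close to $(r-1)/2$ as possible. If that does not suffice, I would instead take the best pair over all $u$ by a global double count: for each $w\in R$, count the ordered pairs $(u,v)$ of vertices of $R$ with $u\to w\to v$, which number $d^-_R(w)\,d^+_R(w)$. I would then lower-bound $\sum_w d^-_R(w)d^+_R(w)$ relative to the $r(r-1)$ choices of pair.

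Vertices with very small in- or out-degree in $T[R]$ contribute little to this sum and must be handled separately. The idea is to pick $u$ or $v$ among them, so that such a vertex is covered by being in $S$, or is covered via the near-source or near-sink structure. Getting the constant exactly $\frac34$, rather than something slightly worse, is where I expect the real work to be.
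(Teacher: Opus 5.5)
Your overall scheme (spend two vertices to cover at least a quarter of the still-uncovered set $R$, working inside $T[R]$, then recurse) is the paper's scheme; the paper phrases it as strong induction on $n$ applied to the subtournament induced by the uncovered vertices. The gap is that you never prove the covering lemma, and the obstacle you anticipate there is not real, because the arcs between $A$ and $B$ are not needed at all. Take $u$ of maximum out-degree in $T[R]$, so that $|A| \geq \lceil (r-1)/2 \rceil$. (For an arbitrary $u$ you could equally work on whichever of $A$, $B$ is larger.) Then take $v \in A$ of maximum in-degree in $T[A]$. Your own averaging remark gives $d^-_{T[A]}(v) \geq \lceil (|A|-1)/2 \rceil \geq \lceil (r-3)/4 \rceil$, and every such in-neighbour $w$ is covered through $u \to w \to v$. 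Counting $u$ and $v$ themselves, at least $(r-3)/4 + 2 > r/4$ vertices of $R$ are covered. Imbalance between $A$ and $B$ only enlarges the side you use, so it helps rather than hurts. The cross-term estimates, the global double count of $\sum_w d^-_R(w)\,d^+_R(w)$, and the separate treatment of near-sources and near-sinks are all unnecessary.

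The rounding also needs an actual argument. As written, $\lceil \log_{4/3} n \rceil$ rounds of two vertices plus a possible final vertex cost up to $2\lceil \log_{4/3} n\rceil + 1$, which can exceed $\lceil 2\log_{4/3} n \rceil$ by $2$. The paper avoids this by induction:
\begin{itemize}
\item if $2 \leq r' \leq 3n/4$ vertices remain, then $2 + \lceil 2\log_{4/3} r' \rceil \leq 2 + \lceil 2\log_{4/3} n - 2 \rceil = \lceil 2\log_{4/3} n \rceil$ exactly;
\item if $r' \leq 1$, at most $3 \leq \lceil 2\log_{4/3} n\rceil$ vertices are used;
\item the base range $2 \leq n \leq 20$ is trivial, since there $n \leq \lceil 2\log_{4/3} n \rceil$.
\end{itemize}
In your iterative terms, this means stopping the greedy rounds once few vertices remain and putting all of them into $S$.
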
%
\begin{proof}
    We proceed by induction on~\(n = |V(T)|\).
    Since~\(\oinp(T) \leq n \leq \left\lceil 2\log_{4/3} n \right\rceil\) for all \(2 \leq n \leq 20\), the bound holds trivially in that case.

    Now assume the result holds for every tournament on~\(m\) vertices with~\(2 \leq m < n\).
    Since~\(T\) is a tournament, \(\sum_{v \in V(T)} d^+_T(v) = \binom{n}{2}\), and there exists~\(v_1 \in V(T)\) such that \(d^+_T(v_1) \geq \lceil \nicefrac{(n-1)}{2} \rceil\).
    Let~\(T_1 = T\bigl[N^+_T(v_1)\bigr]\), and by applying the same averaging argument to the indegrees of the vertices inside~\(T_1\), there exists~\(v_2 \in V(T_1)\) such that
        \(  d^-_{T_1}(v_2) \geq \left\lceil \nicefrac{(|V(T_1)| - 1)}{2} \right\rceil \geq \left\lceil \nicefrac{(n-3)}{4} \right\rceil. \)
    For each~\(w \in N^-_{T_1}(v_2)\), we have~\(v_1 w\) and~\(w v_2\) in~\(T\), so~\(v_1 w v_2\) is a oriented path of length two, and
     \( \{v_1, v_2\} \cup N^-_{T_1}(v_2) \subseteq \oifp\!\bigl(\{v_1, v_2\}\bigr). \)
    Therefore,
    \[ \bigl|\oifp\!\bigl(\{v_1,v_2\}\bigr)\bigr|
        \geq d^-_{T_1}(v_2) + 2
        \geq \left\lceil \frac{n-3}{4} \right\rceil + 2
        \geq \frac{n}{4}.\]
    Let~\(T' = T\bigl[V(T) \setminus \oifp(\{v_1,v_2\})\bigr]\), and note that~\(|V(T')| \leq 3n/4\). 
    If, $|V(T')| \leq 1$, then $\oifp(\{v_1,v_2\}) \geq n - 1$, which implies that $\oinp(T) \leq 3$.
    Thus, assume that $|V(T')| \geq 2$.
    By the inductive hypothesis, \(T'\) has an \(\opc\)-interval set \(S'\) of size at most \(\left\lceil 2\log_{\nicefrac43}|V(T')| \right\rceil \le \left\lceil 2\log_{\nicefrac43}(\nicefrac{3n}{4}) \right\rceil\).
    Note that \(S = \{v_1, v_2\} \cup S'\) is a \(\opc\)-interval set of \(T\), and thus
    \[
                        \oinp(T) \leq |S|
                \leq 2 + \left\lceil 2\log_{\frac{4}{3}}\frac{3n}{4}
                    \right\rceil 
                = 2 + \left\lceil 2\log_{\frac{4}{3}} n - 2 \right\rceil 
                = \left\lceil 2\log_{\frac{4}{3}} n \right\rceil.
    \]
    \qed
\end{proof}

\begin{restatable}{theorem}{tinpQuasiPoly}\label{res:tinp-quasi-poly}%
    \nameref{pro:tinp} can be decided in time \(n^{\Oh(\log n)}\).
\end{restatable}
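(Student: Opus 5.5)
The plan is a brute-force search whose range is cut down by \Cref{res:inp-log-bound}. Given an instance $(T,k)$ with $n = |V(T)|$, first dispose of the trivial cases. If $n \leq 1$, then $\oinp(T) = n$ and we answer directly. Otherwise let $b = \lceil 2\log_{4/3} n \rceil$. By \Cref{res:inp-log-bound} we have $\oinp(T) \leq b$, so if $k \geq b$ we can answer yes immediately. Hence we may assume $k < b = \Oh(\log n)$.

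For the remaining case, enumerate every subset $U \subseteq V(T)$ with $|U| \leq k$ and check whether $\oifp(U) = V(T)$. Answer yes exactly when some such $U$ passes. Correctness is immediate from the definition of $\oinp(T)$ as the minimum size of an $\opc$-interval set.

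To compute $\oifp(U)$, use the pairwise decomposition in the definition of an interval convexity. For each pair $\{u,v\} \subseteq U$, add $u$, $v$, and every $w$ such that either $uw, wv \in A(T)$ or $vw, wu \in A(T)$. With the adjacency matrix this costs $\Oh(n)$ per pair, so $\Oh(k^2 n) = \Oh(n^3)$ per subset.

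The number of subsets is $\sum_{i \leq k} \binom{n}{i} \leq (n+1)^k \leq (n+1)^{b}$. Since $b = \Oh(\log n)$, this is $n^{\Oh(\log n)}$, and the total running time is $n^{\Oh(\log n)} \cdot n^{\Oh(1)} = n^{\Oh(\log n)}$.

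No step here is really an obstacle, because \Cref{res:inp-log-bound} does the substantive work. The only care needed is to use the logarithmic bound to cap $k$ before enumerating, so that large values of $k$ do not blow up the search.
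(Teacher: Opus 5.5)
Your proposal is correct and matches the paper's proof. It uses \Cref{res:inp-log-bound} to answer yes whenever $k$ reaches the logarithmic bound, and otherwise enumerates all $n^{\Oh(\log n)}$ candidate sets and tests each in polynomial time; the explicit $n \leq 1$ case and computing $\oifp(U)$ pair by pair, rather than the paper's per-vertex in/out-neighbour check, are only cosmetic differences.
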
%
\begin{proof}
    Let~\((T, k)\) be an instance of~\nameref{pro:tinp}, where~\(T\) is a tournament on \(n\) vertices. By~\Cref{res:inp-log-bound}, if~$k > \lceil 2 \log_{\nicefrac43} n \rceil$ then the answer is trivially ``yes''. Otherwise~$k=\Oh(\log n)$, and there exists an algorithm that enumerates all subsets \(S \subseteq V(T)\) with \(|S| \leq k\) in~\(n^{\Oh(k)}\) time. Checking whether each~\(S\) is an \(\opc\)-interval set requires verifying, for every~\(v \in V(T) \setminus S\), the existence of~\(u, w \in S\) such that~\(uv,vw \in A(T)\), and this takes~$\Oh(n^2)$ time.
    Therefore, the total running time is \(n^{\Oh(k)}\cdot \Oh(n^2) = n^{\Oh(\log n)}\). \qed
\end{proof}

\subsection{Lower bounds}\label{subsec:oinp-w2-hard}

We show that deciding whether~\(\oinp(T) \leq k\) is unlikely to be in~\P, and it is hard even when parameterized by~\(k\). In fact, we argue the problem is \NP-intermediate, and thus the~$n^{\Oh(\log n)}$ time algorithm from the previous section is likely to be optimal.

\begin{theorem} \label{thm:tinpparam-w2-hard}
    There exists a linear-time many-one parameterized reduction from an instance~$(T,k)$ of \nameref{pro:TDSparam} to an instance~$(T', k+1)$ of~\nameref{pro:tinpparam}.
    
    That is, given a tournament~$T$, in~$\Oh(|A(T)|)$ time we can build a tournament~$T'$ such that~$\oinp(T') = \odn(T) + 1$.
\end{theorem}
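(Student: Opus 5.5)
The plan is to build $T'$ by adding a single universal sink. Let $T'$ be obtained from $T$ by adding a new vertex $x$ together with the arcs $vx$ for every $v \in V(T)$. Then $T'$ is a tournament. It is built in $\Oh(|V(T)| + |A(T)|)$ time, since we copy $T$ and add $|V(T)|$ arcs. The new parameter $k+1$ depends linearly on $k$.

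The key observation concerns membership of a vertex $v \notin U$ in $\oifp(U)$ for a tournament. By definition, this holds exactly when there are $u, w \in U$ with $uv, vw \in A(T')$. In other words, $v$ needs both an in-neighbour and an out-neighbour in $U$. This is the same characterisation used in the proof of \Cref{res:tinp-quasi-poly}. So $U$ is an $\opc$-interval set of $T'$ precisely when every vertex outside $U$ has an in-neighbour and an out-neighbour in $U$.

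For the upper bound $\oinp(T') \leq \odn(T) + 1$, take a minimum dominating set $D$ of $T$ and set $S = D \cup \{x\}$. Let $v \in V(T) \setminus D$. Since $D$ dominates $T$, $v$ has an in-neighbour in $D$, and $x$ is an out-neighbour of $v$. Hence $v \in \oifp(S)$, and $|S| = \odn(T)+1$.

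For the lower bound, let $S$ be any $\opc$-interval set of $T'$.

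\begin{itemize}
\item First, $x \in S$. Otherwise $x$ would need an out-neighbour in $S$, but $x$ is a sink.
\item Next, let $D = S \setminus \{x\} \subseteq V(T)$, and let $v \in V(T) \setminus D$. Then $v \notin S$, so $v$ has an in-neighbour in $S$. That in-neighbour cannot be $x$, which has no out-arcs, so it lies in $D$ and is joined to $v$ by an arc of $T$.
\end{itemize}

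Thus $D$ is a dominating set of $T$, and $|S| \geq \odn(T) + 1$. Combining both bounds gives $\oinp(T') = \odn(T) + 1$. In particular, $\odn(T) \leq k$ if and only if $\oinp(T') \leq k+1$.

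There is no real obstacle here; the only point needing care is the characterisation of $\oifp$ membership for tournaments, which is immediate from the definition of the $\opc$-convexity. By \Cref{res:eth-param}, the linear parameter dependence then also yields \W2-hardness and the \ETH lower bound.
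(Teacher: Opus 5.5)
Your proposal is correct and follows the paper's proof: the paper also adds a single sink $s$ dominated by all of $V(T)$, uses $D \cup \{s\}$ for the upper bound, and observes that the in-neighbour of each uncovered vertex cannot be the sink for the lower bound. The one difference is that you explicitly argue the sink must belong to every $\opc$-interval set, which the paper uses implicitly when it concludes $\odn(T) \leq \oinp(T') - 1$.
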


\begin{proof}
    Let~\((T,k)\) be an instance of \nameref{pro:TDSparam}. Define~\(T'\) as the tournament obtained from~\(T\) by adding a single sink vertex~\(s\), that is, \(V(T') = V(T) \cup \{s\}\) and~\(A(T') = A(T) \cup \{us : u \in V(T)\}\). This construction clearly runs in linear time.

    Let~$S$ be a dominating set of~$T$. Since \(s\) is a sink, for each~$v \in V(T) \setminus S$ it follows that~$uvs$ is a path of length two in~$T'$, for some~$u \in S$. Thus \(S \cup \{s\}\) is a~\(\opc\)-interval set of~\(T'\) and \(\oinp(T') \leq \odn(T) +1\).

    Now let~\(S'\) be a~\(\opc\)-interval set of~$T'$. For each~$v \in V(T) \setminus S'$, there exist~$u, w \in S'$ such that~$uvw$ is path, and note~$u \neq s$ since~$s$ is a sink. Thus~$S' \setminus \{s\}$ is a dominating set of~$T$ and~\(\odn(T) \leq \oinp(T') - 1\). We conclude that~\(\oinp(T') = \odn(T) + 1\).\qed
\end{proof}

By~\Cref{res:eth-param}, we immediately get the following result.

\begin{corollary} \label{res:tinpparam-w2-hard}
    \nameref{pro:tinpparam} is \W2-hard. Moreover, there is no~$f(k)\,|V(T)|^{o(k)}$ time algorithm for it, for every computable~$f$, unless \ETH is falsified.
\end{corollary}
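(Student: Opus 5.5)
The corollary follows by plugging \Cref{thm:tinpparam-w2-hard} into \Cref{res:eth-param}. The plan is to check that the construction of \Cref{thm:tinpparam-w2-hard} meets the hypotheses of \Cref{res:eth-param}: it is a parameterized reduction, and its parameter dependence is linear.

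First, it is a parameterized reduction from \nameref{pro:TDSparam} to \nameref{pro:tinpparam}. Given $(T,k)$, we output $(T',k+1)$, where $T'$ is $T$ plus a sink $s$. This takes $\Oh(|A(T)|)$ time, which is polynomial and hence fine for an FPT reduction. Correctness comes from the identity $\oinp(T') = \odn(T)+1$, which gives
\[
\odn(T)\le k \iff \oinp(T')\le k+1 .
\]
So $(T,k)$ is a yes-instance exactly when $(T',k+1)$ is.

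Second, the new parameter is $k' = k+1$, which is bounded by a function of $k$ alone. It is also linear in $k$, since $k+1 \le 2k$ for $k\ge 1$.

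The first part of \Cref{res:eth-param} then gives \W2-hardness. Its second part, using the linear parameter dependence, rules out an $f(k)\,n^{o(k)}$ algorithm under \ETH.

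The only point needing care is that the size measure is preserved. Here $|V(T')| = |V(T)|+1$, so an $f(k')\,|V(T')|^{o(k')}$ algorithm for the target problem yields, through the reduction, a $g(k)\,|V(T)|^{o(k)}$ algorithm for dominating set on tournaments. This is already built into the statement of \Cref{res:eth-param}, so no separate argument is needed. I expect no real obstacle; the main step is simply verifying linearity of $k\mapsto k+1$.
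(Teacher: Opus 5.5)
Your proposal is correct and is exactly the paper's argument: the corollary is obtained by applying \Cref{res:eth-param} to the linear-time reduction of \Cref{thm:tinpparam-w2-hard}, whose parameter map $k \mapsto k+1$ is linear. Your remark that $|V(T')| = |V(T)|+1$ preserves the size measure is a reasonable sanity check that the paper leaves implicit.
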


By combining our results, we notice that \nameref{pro:tinp} actually has an ``intermediate'' difficulty between having a polynomial solution and being~\NP-complete, similar to the problems of graph isomorphism and integer factorization.

\begin{corollary}
    \nameref{pro:tinp} is \NP-intermediate, unless~\ETH is falsified.
\end{corollary}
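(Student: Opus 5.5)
We prove the two halves of \NP-intermediacy separately, both under \ETH: the problem is not in \P, and it is not \NP-hard.

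For the first half, suppose \nameref{pro:tinp} could be decided in polynomial time $n^{c}$. The same algorithm then solves \nameref{pro:tinpparam} in time $n^{c}$, which is of the form $f(k)\,|V(T)|^{o(k)}$: take $f \equiv 1$ and note that the constant exponent $c$ is $o(k)$ as $k \to \infty$. This contradicts \Cref{res:tinpparam-w2-hard}, so under \ETH the problem is not in \P. Alternatively, a polynomial algorithm would place the \W2-hard parameterized problem in \FPT, giving $\FPT = \W2$ and hence $\FPT=\W1$, which contradicts \Cref{res:eth-fpt-neq-w1}. I would present the first route, since it is the most direct.

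For the second half, suppose \nameref{pro:tinp} were \NP-hard. Then there is a polynomial reduction from \textsc{3-SAT} to it. A formula of size $m$ maps to an instance $(T,k)$ with $|V(T)| = n \le m^{d}$ for some constant $d$.
\begin{itemize}
\item Run the algorithm of \Cref{res:tinp-quasi-poly} on the image instance.
\item It takes time $n^{\Oh(\log n)} = 2^{\Oh(\log^2 n)} = 2^{\Oh(d^2\log^2 m)}$.
\item This is $2^{o(m)}$, so \textsc{3-SAT} is solved in subexponential time, contradicting \ETH.
\end{itemize}

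The main obstacle is the bookkeeping in this second half. The reduction's output size must be polynomial in the formula size, which holds because it is a polynomial-time reduction. The quasi-polynomial bound must also be measured in the instance size; since the tournament has $\Theta(n^2)$ arcs, this only changes constants. Putting the two halves together, under \ETH the problem is in \NP (a certificate is a set of size at most $k$, checked as in \Cref{res:tinp-quasi-poly}), is not in \P, and is not \NP-complete. Hence it is \NP-intermediate.
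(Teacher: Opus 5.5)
Your proof is correct and follows the paper's argument. It has the same three parts: membership in \NP by checking a candidate set, non-membership in \P from \Cref{res:tinpparam-w2-hard}, and non-\NP-hardness by composing a polynomial reduction from \textsc{3-SAT} with the $n^{\Oh(\log n)}$ algorithm of \Cref{res:tinp-quasi-poly}. The only variation is in the ``not in \P'' half: your primary route applies the $f(k)\,n^{o(k)}$ \ETH lower bound directly (a constant exponent is $o(k)$), whereas the paper uses your alternative route via \W2-hardness, $\FPT \neq \W2$ and \Cref{res:eth-fpt-neq-w1}; both routes are valid.
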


\begin{proof}
    Clearly, \nameref{pro:tinp} is in \NP{}, as a candidate set $S \subseteq V(T)$ can be verified in $\Oh(n^2)$ time. 

    By \Cref{res:tinpparam-w2-hard}, \nameref{pro:tinpparam} is \W2-hard, so \nameref{pro:tinp} cannot be in \P{}, unless $\FPT = \W2$, which also falsifies \ETH by~\Cref{res:eth-fpt-neq-w1}.
    
    Now suppose \nameref{pro:tinp} is \NP-complete. Then there is a reduction from any size-$n$ instance of~\textsc{3-SAT} to an $n^{\Oh(1)}$-size instance of~\nameref{pro:tinp}. Then, by~\Cref{res:tinp-quasi-poly}, \textsc{3-SAT} can be solved in quasi-polynomial $n^{\Oh(\log n)}$ time, which falsifies \ETH. \qed
\end{proof}

\subsection{Parameterized upper bounds}\label{subsec:oinp-w2}

We have proved a lower bound for \nameref{pro:tinpparam} in the \mbox{\cclass{W}-hierarchy}.
Now, we will show an upper bound: this problem is in \W2, concluding that it is in fact~\W2-complete. Refer to~\Cref{subsec:param-hard} for the definitions of the~\cclass{W}-hierarchy.
It suffices to show this upper bound for a more general version of the problem, not restricted to tournaments.

\paraProblemDef{\pname{Parameterized $\opc$-interval}}{Orientation $D$ and positive integer $k$.}{$k$}{Is $\oinp(D) \leq k$?}{pro:inpparam}

\begin{theorem}\label{res:inpparam-w2}
    \nameref{pro:inpparam} is in $\W2$.
\end{theorem}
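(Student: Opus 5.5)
We give a parameterized reduction from \nameref{pro:inpparam} to \nameref{pro:weightsat} on circuits of constant depth and weft at most~$2$. The circuit is essentially the one for dominating set, adapted to the $\opc$ condition. Let $(D,k)$ be an instance with $V(D)=\{v_1,\dots,v_n\}$.

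\textbf{Padding to exact weight.} Weighted circuit satisfiability asks for weight exactly $k$, while we want $\oinp(D)\le k$. Since every superset of an $\opc$-interval set is again an interval set, $\oinp(D)\le k$ is equivalent to the existence of an interval set of size exactly $\min(k,n)$. If $k\ge n$ we output a trivial yes-instance, so assume $k<n$.

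\textbf{Input nodes and the key condition.} We use one input node $x_v$ per vertex $v$, meant to indicate $v\in S$. A vertex $v$ lies in $\oifp(S)$ iff
\begin{itemize}
\item $v\in S$, or
\item there exist $u,w\in S$ with $uv,vw\in A(D)$.
\end{itemize}
The second case holds exactly when $S\cap N^-(v)\neq\emptyset$ and $S\cap N^+(v)\neq\emptyset$, because $N^-(v)$ and $N^+(v)$ are disjoint in an orientation, so $u\neq w$ automatically.

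This observation is the crux of the proof. The naive encoding of the second case is a large OR over all pairs $(u,w)$ of small ANDs. Factoring it into two separate ORs is what keeps the weft at $2$.

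\textbf{The circuit.} For each $v$ we add:
\begin{itemize}
\item an or-node $A_v=\bigvee_{u\in N^-(v)}x_u$,
\item an or-node $B_v=\bigvee_{w\in N^+(v)}x_w$,
\item a small and-node $C_v=A_v\wedge B_v$ of indegree two,
\item a small or-node $E_v=x_v\vee C_v$ of indegree two.
\end{itemize}
The output is a single large and-node $\bigwedge_v E_v$. If $N^-(v)$ or $N^+(v)$ is empty, the corresponding OR is the constant $0$. A constant can be realized as $x_v\wedge\neg x_v$, or the gate can simply be dropped so that $E_v$ collapses to $x_v$.

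\textbf{Depth and weft.} The depth is constant (at most $5$). Any input-to-output path passes through at most one large OR ($A_v$ or $B_v$) and the final large AND, so the weft is $2$. Pad gates to indegree at least two where needed by duplicating inputs.

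\textbf{Correctness.} By the characterization above, a satisfying assignment of weight $k$ corresponds exactly to an $\opc$-interval set of size $k$. The construction is polynomial in $|V(D)|+|A(D)|$, and the parameter is unchanged. Hence \nameref{pro:inpparam} is in~$\W2$.
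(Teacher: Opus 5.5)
Your proof is correct and follows the paper's proof: you use the same circuit, with two large or-nodes per vertex over the in- and out-neighbour inputs, a small and-node combining them, a small or-node adding the vertex's own input, and a large output and-node, which gives weft two. Your handling of the exact-weight requirement (monotonicity of $\oifp$ plus padding to size $\min(k,n)$) and of empty neighbourhoods fills in details the paper leaves implicit.
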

\begin{proof}
    Let $(D, k)$ be an instance of \nameref{pro:inpparam}.
    We will show a reduction from $(D, k)$ to an instance $(C_D, k)$ of \nameref{pro:weightsat}, where $C_D$ is a Boolean circuit of weft two.

    We begin with the description of~$C_D$.
    For every vertex $v$ of $D$, create an input node $v_{\text{in}}$, as well as two large or-nodes $v_{-}$ and $v_{+}$, which receive arcs from all input nodes of~$N^-(v)$ and~$N^+(v)$, respectively. Create an and-node $v_1$ receiving arcs from~$v_{-}$ and~$v_{+}$, and an or-node $v_2$ with arcs from $v_{\text{in}}$ and $v_1$.
    Complete the circuit by creating a large output and-node~$t$ with arcs from all nodes~$v_2$.
    See Figure~\ref{fig:inp3-w2-gadget} for an illustration of~$C_D$.

\begin{figure}
    \centering
    \includegraphics[scale=0.6]{fig-inp3-w2-gadget.tex}
    \caption{The circuit constructed in~\Cref{res:inpparam-w2}.}
    \label{fig:inp3-w2-gadget}
\end{figure}

    More formally, take
    \begin{align*}
        &V(C_D) =& \{t\}    && \cup & \bigcup\limits_{v \in V(D)} \{v_{\text{in}}, v_{-}, v_{+}, v_1, v_2\} \text{ and}  \\
        &A(C_D) =& \bigcup\limits_{uv \in A(D)} \{u_{\text{in}} v_{-}, v_{\text{in}} u_{+}\} && \cup & \bigcup\limits_{v \in V(D)} \{v_{-} v_1, v_{+} v_1, v_1 v_2, v_{\text{in}} v_2, v_2 t \}.
    \end{align*}

    Consider a satisfying assignment to~$C_D$, and let~$S$ be the set of all vertices~$v$ such that~$v_{\text{in}}$ is assigned~1.
    Notice that $v_1$ outputs 1 if and only if there exists a vertex $u \in N^-(v) \cap S$ and a vertex $w \in N^+(v) \cap S$.
    In that case, by definition, $v \in \oifp(\{u, w\})$.
    Therefore, $v_2$ outputs 1 if and only if $v \in \oifp(S)$.
    Clearly, the output node $t$ has value 1 if and only if $V(D) = \oifp(S)$, and thus~\(\oinp(D) \leq k\).
    
    We can build this circuit in $\Oh(|A(D)|)$ time; thus, we have a polynomial time reduction from $(D, k)$ to $(C_D, k)$.
    Moreover, notice that any path in the circuit~$C_D$ goes through at most two large nodes, namely~$t$ and at most one node of the form $v_{-}$ or $v_{+}$.
    Therefore, $C_D$ has weft two and the theorem follows.\qed
\end{proof}

\section{Results for \texorpdfstring{$\opsc$}{P3*}-convexity}\label{sec:opsc}

We formally define the problem discussed in the introduction for $\opsc$-convexity, and provide upper and lower bounds for its classical and parameterized variants.

\problemDef{\pname{$\opsc$-interval on Tournaments}}{Tournament graph $T$ and positive integer $k$.}{Is $\oinps(T)\leq k$?}{pro:tinps}

\paraProblemDef{\pname{Parameterized $\opsc$-interval on Tournaments}}{Tournament graph $T$ and positive integer $k$.}{$k$}{Is $\oinps(T) \leq k$?}{pro:tinpsparam}

\subsection{Family of non-trivial instances with large interval sets}\label{subsec:oinps-n}

One might wonder if the size of a minimum~\(\opsc\)-interval set is always small and thus a quasi-polynomial algorithm like the one from~\Cref{res:tinp-quasi-poly} exists for~\nameref{pro:tinps}. That is in fact not the case, as even in a transitive tournament, which is an acyclic tournament isomorphic to a total order, all vertices must be in any~\(\opsc\)-interval set.

A vertex~$v$ in a tournament~$T$ is called \emph{transitive} if~$uv, vw \in A(T)$ implies that~$uw \in A(T)$. Tournaments with transitive vertices are, in a sense, easy to deal with, since it is known that all transitive vertices are contained in any~\(\opsc\)-interval set~\cite[Proposition 1]{araujo_hull_2026}. 
In particular, strongly connected tournaments do not have transitive vertices.
In fact, we may assume that the tournament is strongly connected due to an efficient reduction to strongly connected components~\cite[Proposition 9]{araujo_hull_2026}. A \emph{strong} tournament is a tournament that is strongly connected.
The question of whether strong tournaments can have large minimum interval sets remained. We provide an infinite family of such tournaments below.

\Cref{fig:inps-example} shows the construction of the tournament family.
It was presented by Maia~\cite{maia_wbc_26} in the~\emph{2º Workshop Brasileiro de Combinatória} for a related problem.
\begin{restatable}{proposition}{oinpsLinear}\label{res:oinps-linear}%
For each~\(k \in \mathbb{N}\), there exists a strong tournament~\(T\) with~\(4k\) vertices such that~\(\oinps(T) = \Theta(k)\).
\end{restatable}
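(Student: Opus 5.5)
The plan is to translate $\opsc$-interval sets in tournaments into a local covering condition, and then to build a strong tournament in which every directed triangle is ``short'', so that each vertex of an interval set can only help cover a bounded number of other vertices.

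\emph{Key observation.} In a tournament, a path $uvw$ is a shortest path from $u$ to $w$ exactly when $uw \notin A(T)$, that is, when $wu \in A(T)$. Hence $v \in \oifps(\{u,w\})$ if and only if $u,v,w$ form a directed triangle. So $S$ is an $\opsc$-interval set precisely when every $v \notin S$ lies on a directed triangle whose other two vertices belong to $S$.

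\emph{Construction.} I would not rely on the exact shape of Maia's tournament. Instead I would use a concrete ``almost transitive'' tournament. Order the vertices $v_1,\dots,v_{4k}$. Put $v_i v_j \in A(T)$ for all $i<j$, except that for every $i \leq 4k-2$ the arc between $v_i$ and $v_{i+2}$ is reversed to $v_{i+2}v_i$.

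\emph{Strong connectivity.} The forward path $v_1 v_2 \cdots v_{4k}$ exists, since consecutive arcs are never reversed. From $v_{4k}$ we can return towards $v_1$ by alternating backward arcs $v_{i+2}v_i$ with forward arcs $v_{i-1}v_{i}$ or $v_{i-1} v_{i+1}$. For example, $v_{i+2}\to v_i \to v_{i+1}$ lets us decrease the index by one in two steps. So every vertex reaches every other vertex and $T$ is strong; the remaining base cases are routine.

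\emph{Triangle structure.} Every directed triangle must use at least one backward arc, and all backward arcs join vertices at index distance exactly $2$. I would check that the only directed triangles are $v_i v_{i+1} v_{i+2}$, and in any case that every vertex of a directed triangle has index within distance $2$ of the others. I expect this case analysis, ruling out triangles that use a backward arc together with far-away forward arcs, to be the main (though short) obstacle.

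\emph{Lower bound.} Suppose $S$ is an interval set and $v_i \notin S$. Then two vertices of $S$ lie in $\{v_{i-2},\dots,v_{i+2}\}$. In particular, every window of $5$ consecutive indices contains a vertex of $S$ (or a vertex of $S$ itself lies in it). Partitioning $1,\dots,4k$ into $\lfloor 4k/5\rfloor$ disjoint windows gives $|S| \geq \lfloor 4k/5 \rfloor = \Omega(k)$.

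\emph{Upper bound and conclusion.} Trivially $\oinps(T) \leq 4k$. Together with the lower bound this gives $\oinps(T)=\Theta(k)$. If the triangle check in the middle step turns out to be messier than expected, the fallback is to make the reversed arcs $v_{i+1}v_i$ only inside blocks, joining blocks by a single reversed arc between consecutive blocks. The same ``bounded-span triangles'' argument then applies with a larger window constant.
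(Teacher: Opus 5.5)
Your triangle reformulation and the strong-connectivity check are fine. The triangle classification that your lower bound rests on, however, is false. For indices $a<b<c$, the triple $\{v_a,v_b,v_c\}$ is a directed triangle in exactly two situations:
\begin{itemize}
\item the pairs $\{a,b\}$ and $\{b,c\}$ are forward and $\{a,c\}$ is reversed, which forces $(b,c)=(a+1,a+2)$;
\item the pairs $\{a,b\}$ and $\{b,c\}$ are both reversed and $\{a,c\}$ is forward, which forces $(b,c)=(a+2,a+4)$.
\end{itemize}
So besides $v_av_{a+1}v_{a+2}$ there are also the triangles $v_a\to v_{a+4}\to v_{a+2}\to v_a$, of span $4$. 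Neither of the two claims in your triangle step holds.

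This breaks your sentence ``if $v_i\notin S$ then two vertices of $S$ lie in $\{v_{i-2},\dots,v_{i+2}\}$'', because $v_i$ may be covered by $v_{i+2}$ and $v_{i+4}$. Concretely, for $6\le i\le 4k-4$ the set $V(T)\setminus\{v_{i-2},v_{i-1},v_i,v_{i+1}\}$ is an interval set, and it meets that window only in $v_{i+2}$.

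The repair is immediate, and your conclusion survives. In either kind of directed triangle, each vertex is at index distance at most $2$ from some other vertex of the triangle. Hence if $v_i\notin S$, at least one vertex of $S$ lies in $\{v_{i-2},\dots,v_{i+2}\}$. That weaker fact is all your window argument actually uses: if $\{v_j,\dots,v_{j+4}\}$ missed $S$, the middle vertex $v_{j+2}$ could not be covered. So $|S|\ge\lfloor 4k/5\rfloor$ still holds, using the trivial $|S|\ge 2$ for small $k$. Alternatively, the span bound $4$ gives the same conclusion with windows of $9$ vertices.

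With this fix, your proof is a valid alternative to the paper's. The paper uses blocks $B_i$ with reversed arcs $u^i_4u^i_1$ and $u^{i+1}_1u^i_4$. An exact local analysis there shows that if $u^i_2$ or $u^i_3$ is missing from $S$, both covering vertices lie in $B_i$, so $|S\cap B_i|\ge 2$ and $\oinps(T)\ge 2k$. Your uniform construction with a pigeonhole over windows gives a weaker constant, roughly $n/5$. The principle behind it is more general, though: any tournament with a vertex ordering in which every directed triangle has span at most $d$ satisfies $\oinps(T)\ge\lfloor n/(2d+1)\rfloor$.
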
%
\begin{proof}
We construct tournament~\(T\) with blocks~$B_i = \{u_1^i, u_2^i, u_3^i, u_4^i\}$ of four vertices, for each~$i \in [k]$. On each block, all arcs go left to right except for~$u_4^iu_1^i$. Between blocks, all arcs also go left to right, except for arcs~$u^{i+1}_1 u^i_4$, for each~$i \in [k-1]$. It is easy to see that tournament~$T$ is strong. See~\Cref{fig:inps-example} for an illustration of~\(T\).
\begin{figure}
    \centering
    \includegraphics[scale=0.8]{fig-inps-example.tex}
    \caption{Example construction for $k=2$. All arcs not shown in the figure are oriented from left to right.}
    \label{fig:inps-example}
\end{figure}%

First note that~\(\oinps(T) \leq |V(T)| = \Oh(k)\).
Now let~\(S\) be an interval set of \(T\), and we claim that any \(\opsc\)-interval set~\(S\) of~\(T\) satisfies \(|S \cap B_i| \geq 2\) for every~\(i \in [k]\).
First, suppose that~\(u^i_2 \notin S\) for some~\(i \in [k]\).
Since~\(S\) is an interval set, there must exist~\(a, b \in S\) such that~\(a  u_2^i  b\) is an oriented minimum path of length two in~\(T\), that is,~\(ab \notin A(T)\).

We consider the possible locations of~\(a\) and~\(b\).
To satisfy~\(a u_2^i \in A(T)\), either~\(a = u_1^i \in B_i\), or~\(a \in B_\ell\) for some~\(\ell < i\).
Similarly, to satisfy~\(u_2^i  b \in A(S)\), either~\(b \in \{u_3^i, u_4^i\} \subseteq B_i\), or~\(b \in B_r\) for some~\(r > i\).
Note that any vertex in~$B_\ell$ for some~$\ell < i$ has outgoing arcs to all vertices in~\(\{u^i_3, u^i_4\} \cup B_r\) for any~$r > i$, thus since~$ab \notin A(T)$ it must be the case that~\(a = u^i_2\). Analogously, any vertex in~\(B_r\) for some~\(r > i\) has incoming arcs from all vertices in~\(\{u^i_1\} \cup B_\ell\) for any~\(\ell < i\), and thus~\(b \in \{u^i_3, u^i_4\}\).

Therefore both~\(a\) and~\(b\) must lie in~\(B_i\), and that the claim holds, that is, \(|S \cap B_i| \geq 2\). The case when~\(u^i_3 \notin S\) follows in a symmetrical fashion. Otherwise,~\(\{u^i_2, u^i_3\} \subseteq S\) and obviously the claim holds.
Therefore,~\(\oinps(T) \geq 2k\), and together with the upper bound it follows that~\(\oinps(T) = \Theta(n)\). \qed
\end{proof}
This implies that a brute force algorithm for \nameref{pro:tinps} which lists and tests all subsets of vertices in increasing size, may take~$2^{\Omega(n)}$ time, even when restricted to strong tournaments.

\subsection{Lower bounds}\label{subsec:oinps-w2-hard}

We show that deciding whether~\(\oinps(T) \leq k\) is unlikely to be in~\P, and it is hard even when parameterized by~\(k\). Therefore, the trivial~$n^{\Oh(k)}$ time algorithm is likely to be optimal.

Note that, in a tournament~\(T\), if~$uvw$ is a \emph{minimum} oriented path of size two, then~$wu \in A(T)$ and we say~\(uvw\) is an oriented triangle, which yields the following.

\begin{lemma} \label{res:int-triangle}
    If~$T$ is a tournament, then~$S \subseteq V(T)$ is a $\opsc$-interval set of~$T$ if and only if, for every~$v \in V(T) \setminus S$, there exist~$u, w \in S$ such that~$uvw$ is an oriented triangle.
\end{lemma}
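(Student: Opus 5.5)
The plan is to unfold the definition of $\oifps$ on pairs and use the fact that in a tournament every pair of distinct vertices is joined by exactly one arc. By the definition of an interval convexity, $\oifps(S) = S \cup \bigcup_{\{u,w\} \subseteq S} \oifps(\{u,w\})$. So $S$ is an $\opsc$-interval set exactly when every $v \in V(T) \setminus S$ lies in $\oifps(\{u,w\})$ for some pair $u,w \in S$. Since $v \notin S$, the vertex $v$ is not an endpoint. Hence $v$ must be an internal vertex of a shortest path of length two between $u$ and $w$, in one of the two directions. Relabelling $u$ and $w$ if necessary, we may assume the path is $uvw$, so $uv, vw \in A(T)$.

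The core step is the equivalence: for distinct $u,w$ with $uv, vw \in A(T)$, the path $uvw$ is a shortest $u$--$w$ path if and only if $wu \in A(T)$.
\begin{itemize}
\item \emph{Shortest path gives $wu$.} If $uvw$ is shortest, the distance from $u$ to $w$ is two, so $uw \notin A(T)$. Since $T$ is a tournament and $u \neq w$, this forces $wu \in A(T)$, and $uvw$ closes into the oriented triangle $u \to v \to w \to u$.
\item \emph{Triangle gives shortest path.} If $uvw$ is an oriented triangle, then $wu \in A(T)$, so by antisymmetry $uw \notin A(T)$. Thus the distance from $u$ to $w$ is at least two, and the length-two path $uvw$ is shortest, giving $v \in \oifps(\{u,w\})$.
\end{itemize}

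Two small points remain to check.
\begin{itemize}
\item \emph{$u \neq w$.} A path $uvu$ is not a path, and the definition ranges over pairs $\{u,w\}$.
\item \emph{Shortest in which direction.} The definition means shortest in the direction in which the path is traversed. Our relabelling already fixed that direction to be from $u$ to $w$, so the argument above is the one needed.
\end{itemize}

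Combining the two directions with the unfolding of $\oifps(S)$ yields the lemma. No step is a real obstacle; the only care needed is the direction convention just noted.
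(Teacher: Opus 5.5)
Your proposal is correct and follows the same idea as the paper, which justifies the lemma only by the one-line remark that in a tournament a minimum oriented path $uvw$ of length two forces the arc $wu$, so it closes into an oriented triangle. You additionally spell out the converse direction (a triangle yields a shortest path) and the points about endpoints and path direction, which the paper leaves implicit.
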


\begin{theorem} \label{res:domt-inps-reduction}
    There exists a linear-time many-one parameterized reduction from an instance~$(T,k)$ of \nameref{pro:TDSparam} to an instance~$(T', k+2)$ of~\nameref{pro:tinpsparam}.
    
    More precisely, given a tournament~$T$, in~$\Oh(|A(T)|)$ time we can build a tournament~$T'$ such that~$\oinps(T') = \odn(T) + 2$.
\end{theorem}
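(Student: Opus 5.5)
The approach mirrors \Cref{thm:tinpparam-w2-hard}, using \Cref{res:int-triangle} to rephrase everything in terms of oriented triangles. A single sink cannot work here. In an oriented triangle $u v w$ covering a vertex $v$ of $T$ with a dominator $u \in S$ ($uv \in A(T)$), the third vertex $w$ must satisfy $vw$ and $wu$. So any gadget vertex $w$ would have to point to dominators and receive arcs from dominated vertices, and a new vertex adjacent uniformly to $V(T)$ cannot do this. I would therefore build $T'$ from two copies of $V(T)$ plus two new vertices $x,y$. The vertex count then becomes $2n+2$ instead of $n+2$, but this is still linear in $|A(T)|$, since $|A(T)|=\binom n2$.

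\textbf{Construction.} Let $V_1=\{v_1 : v\in V(T)\}$ and $V_2=\{v_2 : v \in V(T)\}$, and inside each copy put the arcs of $T$. Between the copies, put $u_1v_2$ whenever $u=v$ or $uv\in A(T)$, and $v_2u_1$ otherwise. Add $V_2 \to x$, $x \to V_1$, $V_1\to y$, $y \to x$ and $y \to V_2$.

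\textbf{Upper bound.} Given a dominating set $D$ of $T$, I claim $S=\{u_1 : u \in D\}\cup\{x,y\}$ is a $\opsc$-interval set of $T'$. Each $v_2$ has a closed dominator $u\in D$, so $u_1 v_2 x$ is an oriented triangle. Each $v_1\notin S$ lies on the oriented triangle $x v_1 y$. This gives $\oinps(T')\le \odn(T)+2$.

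\textbf{Lower bound.} Given an interval set $S$ of $T'$, I would first argue that $x,y\in S$ may be assumed. For $x \notin S$, the vertex $x$ must be the middle of a triangle $a x b$ with $a\in V_2\cup\{y\}$ and $b\in V_1$. If $a = y$ this fails because $b \to y$, so $a=v_2$ with $b=u_1$ and $v_2u_1$, meaning $u$ does not closed-dominate $v$. I expect a local exchange to show we can swap vertices so that $x,y \in S$ without increasing $|S|$; if not, I would tweak the orientations around $x,y$ (for instance, make $x$ lie only on triangles through $y$) so that $x$ and $y$ are forced. Next I would look at $v_2\notin S$ covered by a triangle $a v_2 b$. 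The tail $a$ lies in $V_1\cup V_2$ (as $x,y$'s relation to $v_2$ is fixed), and in either case $a$ projects to a vertex of $T$ dominating $v$. Then the projection $\pi(S\setminus\{x,y\})$ is a dominating set of $T$, counting $v$ itself when $v_2\in S$ or $v_1 \in S$. This yields $\odn(T)\le \oinps(T')-2$.

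\textbf{Main obstacle.} The hard step is forcing $x,y$ into $S$, or at least charging two units of $S$ to something that is not a dominating vertex. It must be impossible for, say, two $V_2$-vertices to cover $x$ and $y$ more cheaply while also helping domination. If the exchange argument fails, I would first try adding arcs so that $x$ (respectively $y$) becomes the middle of no oriented triangle except ones using $y$ (respectively $x$). That would make $\{x,y\}\subseteq S$ forced, as with transitive vertices in~\cite[Proposition 1]{araujo_hull_2026}.
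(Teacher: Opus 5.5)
\paragraph{Verdict.} There is a genuine gap: your construction is wrong, and $\oinps(T')=\odn(T)+2$ fails for it, so no exchange argument can complete the lower bound.

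\paragraph{Where it fails.} The false step is ``the tail $a$ lies in $V_1\cup V_2$''. You have $y\to V_2$ and $V_1\to y$, so $y\,v_2\,w_1$ is an oriented triangle whenever $vw\in A(T)$. Because $T$ is a tournament, every $v\neq a$ is either an out-neighbour or an in-neighbour of a fixed vertex $a$, and your gadget covers $v_2$ in both cases. Fix any $a\in V(T)$ and take $S=\{x,y,a_1\}$:
\begin{itemize}
\item $x\,v_1\,y$ covers every $v_1\neq a_1$;
\item $a_1\,a_2\,x$ covers $a_2$;
\item for $v\neq a$, $a_1\,v_2\,x$ covers $v_2$ if $av\in A(T)$, and $y\,v_2\,a_1$ covers it if $va\in A(T)$.
\end{itemize}
Hence $\oinps(T')\le 3$ for \emph{every} tournament $T$, whatever $\odn(T)$ is. The map therefore sends no-instances to yes-instances.

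Your treatment of $x$ also has a direction error. The triple $y\,x\,b_1$ \emph{is} an oriented triangle for every $b_1\in V_1$, since $y\to x\to b_1\to y$. Symmetrically, $u_1\,y\,x$ covers $y$, so neither gadget vertex is forced. For instance, if $D$ is a $\opsc$-interval set of $T$, then $\{x\}\cup\{u_1:u\in D\}$ is already an interval set of $T'$.

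\paragraph{What is needed.} The paper's gadget supplies exactly two properties.
\begin{itemize}
\item No gadget vertex may have an arc into $V_2$. Then the tail of any triangle covering $v_2$ is $v_1$, or a copy $w_1$ or $w_2$ of an in-neighbour $w$ of $v$, and your projection argument goes through.
\item The gadget must force two of its own vertices into every interval set.
\end{itemize}
Your fallback does not give the second property. Making $x$ and $y$ lie only on triangles through each other forces only $S\cap\{x,y\}\neq\emptyset$. It also removes the triangles $u_1v_2x$ that your upper bound uses.

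The paper keeps your two copies and cross arcs, which coincide with its own. It replaces $\{x,y\}$ by an oriented triangle $xyz$, with $x\to V_1$ and all other arcs between $V_1\cup V_2$ and $\{x,y,z\}$ directed into the triangle. Then $y$ lies only on $xyz$. Every triangle through $z$ contains $x$, its unique out-neighbour. Together these give $|S'\cap\{x,y,z\}|\ge 2$. The upper bound uses $\{x,z\}\cup\{u_1:u\in D\}$ with the triangles $x\,v_1\,z$, $u_1\,v_2\,x$ and $x\,y\,z$.
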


\begin{proof}
    Given a tournament~$T$, build~$T'$ starting with two copies~$T_1$ and~$T_2$ of~$T$, and add to them an oriented triangle~$xyz$. For each arc~$uv \in A(T)$, add arcs~$u_1 v_2$ and~$u_2 v_1$, as well as arcs~$u_1 u_2$ for each~$u \in V(T)$. Finally, add arcs from~$x$ to every vertex in~$T_1$, and all other arcs are incoming to the triangle~$xyz$. See~\Cref{fig:domt-inpst} for an illustration of the reduction.
\begin{figure}
    \centering
    \includegraphics[scale=0.5]{fig-domt-inpst-v2.tex}
    \caption{Construction of the tournament $T'$ in the reduction from~\Cref{res:domt-inps-reduction}. The tournaments $T_1$ and $T_2$ are identical copies of the original tournament $T$, and~$uv \in A(T)$. Thick gray arrows represent all-to-all arcs; for example, the arrow from $T_2$ to the $xyz$ bounding box indicates that every vertex in $T_2$ has an outgoing arc to $x$, $y$, and $z$.}
    \label{fig:domt-inpst}
\end{figure}

    More formally, take
    \begin{alignat*}{2}
        V(T') &= \{x,y,z\}    && \cup \bigcup\limits_{u \in V(T)} \{u_1,u_2\} \text{ and}  \\
        A(T') &= \{xy,yz,zx\} && \cup \bigcup\limits_{uv \in A(T)} \{u_1 v_1, u_2 v_2, u_1 v_2, u_2 v_1\} \\
              &               && \cup \bigcup\limits_{u \in V(T)} \{u_1 u_2, x u_1, u_1 y, u_1 z, u_2x, u_2y, u_2z\}.
    \end{alignat*}
    
    The tournament~$T'$ can be easily constructed in~$\Oh(|A(T)|)$ time.     
    We now argue that~$\oinps(T') = \odn(T) + 2$. Let~$S \subseteq V(T)$ be a dominating set of~$T$, and take~$S' = \{u_1 \mid u \in S\} \cup \{x,z\}$. Then, for each~$v \in V(T)$,
    \begin{enumerate}[label=(\roman*)]
        \item $x v_1 z$ is an oriented triangle, with $x,z \in S'$;
        \item if~$v \in S$, then~$v_1 v_2 x$ is an oriented triangle with~$v_1, x \in S'$;
        \item if~$v \notin S$, then there exists~$u \in S$ such that~$uv \in A(T)$, thus~$u_1 v_2 x$ is an oriented triangle with~$u_1, x \in S'$.
    \end{enumerate}
    Finally,~$xyz$ is an oriented triangle with~$x,z \in S'$. Therefore, $S'$ is a~$\opsc$-interval set of~$T'$ by~\Cref{res:int-triangle}, which proves~$\oinps(T') \leq \odn(T) + 2$.

    For the other direction, let~$S'$ be a~$\opsc$-interval set of~$T'$. First, suppose that~$|S' \cap \{x,y,z\}| \leq 1$. The only oriented triangle containing~$y$ is~$xyz$, meaning if~$y \notin S'$ then~$x,z \in S'$ by~\Cref{res:int-triangle}. So~$\{y\} = S' \cap \{x, y, z\}$. However,~$zx$ is the only outgoing arc from~$z$, thus all oriented triangles containing~$z$ also contain~$x$. This contradicts~\Cref{res:int-triangle}, because $x \notin S'$, but $z$ is also not in $S'$.
    Thus~$S'$ has at least two elements from~$\{x, y, z\}$.

    Now take~$S = \{u \mid u_1 \in S' \text{ or } u_2 \in S'\}$, and note that we just showed~$|S| \leq |S'| - 2$. Let~$v \in V(T) \setminus S$, hence~$v_1, v_2 \notin S'$. Since~$\oifps(S') = V(T')$, there exists~$u \in S'$ such that~$uv_2 \in A(T')$. Note~$v_2$ has no incoming arc from~$\{x,y,z\}$ by the construction of~$T'$. Thus either~$u = w_1$ or~$u = w_2$ for some~$w \in V(T) \setminus\{v\}$. In either case~$wv \in A(T)$ by the construction of~$T'$, and~$w \in S$ by the construction of~$S$. Therefore,~$S$ is a dominating set, which proves~$\odn(T) \leq \oinps(T') - 2$.\qed
\end{proof}

By~\Cref{res:eth-param}, we immediately get the following.

\begin{corollary} \label{res:inps-w2-hard}
    \nameref{pro:tinpsparam} is \W2-hard. Moreover, there is no~$f(k)\,|V(T)|^{o(k)}$ time algorithm for it, for every computable~$f$, unless \ETH is falsified.
\end{corollary}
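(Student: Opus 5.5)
This follows directly from \Cref{res:eth-param} applied to \Cref{res:domt-inps-reduction}; I would check its hypotheses and chain them.

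First, \Cref{res:domt-inps-reduction} maps an instance $(T,k)$ of \nameref{pro:TDSparam} to the instance $(T',k+2)$ of \nameref{pro:tinpsparam} in $\Oh(|A(T)|)$ time. I would verify that this is a parameterized reduction. Correctness follows from the identity $\oinps(T') = \odn(T)+2$: we have $\odn(T)\le k$ if and only if $\oinps(T')\le k+2$. The new parameter $k+2$ depends only on $k$, and the running time is polynomial, in fact linear, in the input size. So the first conclusion of \Cref{res:eth-param} applies, and \nameref{pro:tinpsparam} is \W2-hard.

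For the ETH lower bound, the reduction has linear parameter dependence, since $k' = k+2 = \Oh(k)$. Its output size is also polynomial, indeed linear: $|V(T')| = 2|V(T)|+3$. The second part of \Cref{res:eth-param} then rules out any $f(k)\,|V(T)|^{o(k)}$ algorithm for \nameref{pro:tinpsparam} unless \ETH fails.

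It is worth sketching why this holds, to confirm the hypotheses suffice. Suppose there were an $f(k')\,|V(T')|^{o(k')}$ algorithm for \nameref{pro:tinpsparam}. Composing it with the reduction would give a $g(k)\,|V(T)|^{o(k)}$ algorithm for dominating set on tournaments, for some computable $g$. This uses that $|V(T')|^{o(k+2)} = \Oh(|V(T)|)^{o(k)}$ and that the constant factors are absorbed. That would contradict the \ETH lower bound encoded in \Cref{res:eth-param}.

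The only real obstacle is bookkeeping: confirming that ``linear parameter dependence'' in the sense of \Cref{res:eth-param} covers the additive shift $k\mapsto k+2$, and that the polynomial blow-up in instance size is harmless. I expect both to be routine.
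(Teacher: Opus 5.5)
Your proposal is correct and matches the paper, which obtains this corollary immediately by applying \Cref{res:eth-param} to the reduction of \Cref{res:domt-inps-reduction}. Your added checks (the parameter map $k \mapsto k+2$ being linear, and $|V(T')| = 2|V(T)|+3$ being linear in the input size) are exactly the hypotheses that application needs.
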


The consequences for a polynomial algorithm for \nameref{pro:tinps} are more clearly spelled out in the following corollary.

\begin{corollary} \label{res:inps-not-p}
    \nameref{pro:tinps} is not in~\P, unless~$\FPT = \W2$ or \ETH is falsified.
\end{corollary}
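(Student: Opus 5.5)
The plan is to argue by contradiction, reusing the argument the paper gave for the $\opc$ case. Suppose \nameref{pro:tinps} can be decided in polynomial time, say $n^{c}$ for some constant $c$, on a tournament with $n$ vertices.

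Given an instance $(T,k)$ of \nameref{pro:tinpsparam}, we simply run this algorithm, ignoring the parameter. Its running time $|V(T)|^{c}$ has the form $f(k)\,|V(T)|^{\Oh(1)}$ with $f \equiv 1$, so \nameref{pro:tinpsparam} would be in \FPT. By \Cref{res:inps-w2-hard}, \nameref{pro:tinpsparam} is \W2-hard. Hence every problem in \W2 would admit a parameterized reduction to a problem in \FPT. Since \FPT is closed under parameterized reductions, this gives $\W2 \subseteq \FPT$, and with $\FPT \subseteq \W2$ we get $\FPT = \W2$.

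The \ETH part of the statement can be closed in either of two ways. The first uses the second half of \Cref{res:inps-w2-hard}: a polynomial $|V(T)|^{c}$ algorithm is in particular an $f(k)\,|V(T)|^{o(k)}$ algorithm for every $k$ exceeding some constant. The second goes through inclusions: $\FPT = \W2$ implies $\FPT = \W1$, because $\FPT \subseteq \W1 \subseteq \W2$, and that contradicts \ETH by \Cref{res:eth-fpt-neq-w1}.

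The only delicate point is the asymptotic convention in \Cref{res:inps-w2-hard}. There, $o(k)$ means the exponent $g(k)$ satisfies $g(k)/k \to 0$, and a constant exponent $c$ satisfies this. For this reason I would present the cleaner argument through $\FPT = \W2 \Rightarrow \FPT = \W1$ and \Cref{res:eth-fpt-neq-w1}. That route mirrors the paper's corollary for $\opc$ and avoids the asymptotic bookkeeping entirely.
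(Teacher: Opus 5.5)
Your proposal is correct and follows essentially the same route as the paper. A polynomial-time algorithm puts \nameref{pro:tinpsparam} in \FPT, \W2-hardness then forces $\FPT = \W2$, and \Cref{res:eth-fpt-neq-w1} turns this into a violation of \ETH. That last step goes through $\FPT \subseteq \W1 \subseteq \W2$, which you make explicit and the paper leaves implicit. Your alternative route through the $f(k)\,n^{o(k)}$ lower bound of \Cref{res:inps-w2-hard} is also valid, since a constant exponent is $o(k)$.
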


\begin{proof}
    Suppose \nameref{pro:tinps} is in \P. Then, every parameterization of it is in~\FPT, including~\nameref{pro:tinpsparam}. Therefore, by~\Cref{res:inps-w2-hard}, all problems in~\W2 are in~\FPT, thus~$\FPT = \W2$. By~\Cref{res:eth-fpt-neq-w1}, this also means~\ETH is falsified. \qed
\end{proof}

\Cref{res:inps-not-p} is slightly weaker than stating~\nameref{pro:tinps} is~\NP-hard.\footnote{That is because~$\FPT \neq \W2$ implies~$\P \neq \NP$, but not the other way around. However, both conjectures are still widely believed to be true.} Also note that, while~\ETH implies~\pname{Dominating Set on Tournaments}~is not~\NP-hard, that is not the case for~\nameref{pro:tinps}. In fact, we conjecture it \emph{is}~\NP-hard.

\subsection{Parameterized upper bounds}\label{subsec:oinps-w2}

We now prove that \nameref{pro:tinpsparam} is \W2-complete.
As in the case of $\opc$-interval sets, it suffices to show that a more general version of the problem, not restricted to tournaments, is in \W2.

\paraProblemDef{\pname{Parameterized $\opsc$-interval}}{Orientation $D$ and positive integer $k$.}{$k$}{Is $\oinps(D) \leq k$?}{pro:inpsparam}

\begin{theorem}\label{res:inpsparam-w2}
    \nameref{pro:inpsparam} is in $\W2$.
\end{theorem}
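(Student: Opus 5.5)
We will mimic the proof of \Cref{res:inpparam-w2}: from an instance $(D,k)$ of \nameref{pro:inpsparam} we build, in polynomial time, a Boolean circuit $C_D$ of constant depth and weft two such that $C_D$ has a satisfying assignment of weight $k$ exactly when $\oinps(D)\le k$. The new difficulty is that $v \in \oifps(\{u,w\})$ needs the path $uvw$ to be \emph{shortest}: we must have $uv, vw \in A(D)$, $uw \notin A(D)$, and $u \neq w$. So membership depends on a \emph{pair} of input vertices together, and the two independent large or-nodes $v_-$, $v_+$ used before no longer suffice.

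For each vertex $v$ we create an input node $v_{\text{in}}$. For each ordered pair $(u,w)$ with $u\in N^-(v)$, $w\in N^+(v)$, $u\neq w$ and $uw\notin A(D)$, we create a small and-node $p_{v,u,w}$ (indegree two) receiving arcs from $u_{\text{in}}$ and $w_{\text{in}}$. Then a large or-node $v_1$ receives arcs from all the nodes $p_{v,u,w}$, a small or-node $v_2$ receives arcs from $v_{\text{in}}$ and $v_1$, and a large output and-node $t$ receives arcs from all nodes $v_2$. There are $\Oh(n^3)$ nodes, so the construction is polynomial, and the depth is constant (four).

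For correctness, let $S$ be the set of vertices whose input nodes are set to~1. Then $p_{v,u,w}$ outputs 1 iff $u,w\in S$ and $uvw$ is a shortest path of length two, i.e.\ $v\in\oifps(\{u,w\})$. Hence $v_1$ outputs 1 iff $v\in\oifps(S)\setminus S$ is witnessed by some pair, and $v_2$ outputs 1 iff $v\in\oifps(S)$. So $t$ outputs 1 iff $\oifps(S)=V(D)$.

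We also need weight exactly $k$. An interval set of size less than $k$ can be padded with extra vertices, since supersets of interval sets remain interval sets because $\oifps$ is monotone. This works as long as $k\le |V(D)|$; if $k \geq |V(D)|$ the instance is trivially a yes-instance, and we map it to a fixed trivial yes-instance.

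The weft bound is the step to check carefully. On any input-to-output path, the nodes $p_{v,u,w}$ and $v_2$ are small (indegree two), so the only large nodes are $v_1$ and $t$. That gives weft two, and therefore \nameref{pro:inpsparam} is in $\W2$. The main obstacle is exactly the non-independence of $u$ and $w$ caused by the condition $uw\notin A(D)$; encoding it with indegree-two and-gates avoids a third large layer, since small gates do not count toward the weft.
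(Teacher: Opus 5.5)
Your proposal is correct and is essentially the paper's construction: and-gates of indegree two for each admissible pair, one large or-gate $v_1$ per vertex that also accounts for $v \in S$, and a large output and-gate $t$. Your separate small or-node $v_2$ is only cosmetic. Your padding argument for weight exactly $k$ (using monotonicity of the interval function) and your identification of $v_1$ and $t$ as the only large nodes on a path are, if anything, more careful than the paper's write-up. The paper glosses over the exact-weight issue and misnames the large node as $v_{ab}$ in its weft count.
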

\begin{proof}
    Let $(D, k)$ be an instance of \nameref{pro:inpsparam}.
    We will show a reduction from $(D, k)$ to an instance $(C_D, k)$ of \nameref{pro:weightsat}, where $C_D$ is a Boolean circuit of weft two.

    We begin with the description of~\(C_D\).
    For every vertex $v$ of $D$, create an input node $v_{\text{in}}$.
    For every pair of vertices $ab \subseteq N^-(v) \times N^+(v)$ where there is no arc~\(ab\), create an \mbox{and-node} $v_{ab}$ with arcs incoming from~\(a_{\text{in}}\) and~\(b_{\text{in}}\).
    Create a large or-node $v_1$ with arcs incoming from every~$v_{ab}$, as well as from~\(v_{\text{in}}\).
    Complete the circuit by creating a large output and-node~\(t\) with arcs incoming from~\(v_1\) for each~\(v \in V(D)\).
    See Figure~\ref{fig:inps-w2-gadget} for an illustration of~\(C_D\).

    \begin{figure}
        \centering
        \includegraphics[scale=0.6]{fig-inps-w2-gadget.tex}
        \caption{The circuit constructed in~\Cref{res:inpsparam-w2}.}
        \label{fig:inps-w2-gadget}
    \end{figure}

    More formally, let~\(N'_v = \{ab \mid av, vb \in A(D) \wedge ab \notin A(D)\}\) and take
    \begin{align*}
        V(C_D) &= \{t\}  \cup  \bigcup\limits_{v \in V(D)} \left(\{v_{\text{in}}, v_1\} \cup \{v_{ab} \mid ab \in N'_v \} \right)  \text{ and}  \\
        A(C_D) &= \bigcup\limits_{v \in V(D)}{\left(\{v_{\text{in}} v_1, v_1 t\} \cup \{a_{\text{in}} v_{ab}, b_{\text{in}} v_{ab}, v_{ab} v_1  \mid ab \in N'_v\} \right)}.
    \end{align*}
    
    Consider a satisfying assignment of weight~\(k\) to~\(C_D\), and let~\(S\) be the set of all vertices~\(v\) such that~\(v_{\text{in}}\) is assigned~1.
    Clearly, if $v \in S$, $v_1$ outputs 1; then suppose that $v \notin S$.
    In that case, $v_1$ outputs~1 if and only if~$v_{ab}$ outputs~1 for some~$a,b$, that is, if $a \in N^-(v) \cap S$, $b \in N^+(v) \cap S$, and~\(ab \notin A(D)\).
    In other words, $v$ is in the shortest path of length two from $a$ to $b$, thus $v \in \oifps(\{a, b\}) \subseteq \oifps(S)$.
    Therefore, $v_1$ outputs 1 if and only if $v \in \oifps(S)$, and the output node~\(t\) has value~1 if and only if $V(D) = \oifps(S)$, in which case~\(\oinps(D) \leq k\).
    
    We have a polynomial time reduction from $(D, k)$ to $(C_D, k)$, and~$C_D$ has weft two, since each path from an input to the output goes through at most one large node of the form~\(v_{ab}\), plus the output~\(t\) itself.
    Therefore, the theorem follows.\qed  
\end{proof}

\section{Conclusion and further research}\label{sec:conclusion}

We show that \nameref{pro:tinpparam} is \W2-hard and, under \ETH, its classical decision version \nameref{pro:tinp} is \NP-intermediate. In contrast, for \nameref{pro:tinpsparam}, we only establish \W2-hardness. We present a family of non-trivial instances with~\(\oinps(T) = \Theta(|V(T)|)\), and argue it is unlikely to have a subexponential solution, which we believe is the case.
\begin{conjecture}
    \nameref{pro:tinps} is \NP-complete.
\end{conjecture}

For further research, there are interesting variants of the parameters and convexities we presented.
For the geodesic convexity (\(\ogc\)), in which we consider any minimum path, not just those of length two, one can consider the complexity of deciding whether~\(\oing(T) \leq k\) when~\(T\) is a strong tournament, which was also raised as an open question by~\citet[Problem 31]{araujo_hull_2026}. Moreover, one can explore the complexity of deciding whether~$\ohnps(T) \leq k$ or~\(\ohng(T) \leq k\) in oriented split graphs and oriented cobipartite graphs, which is also still open.

\subsubsection{Acknowledgments}

We obtained the initial results which led to this paper in the \textit{2º Workshop Brasileiro de Combinatória} (WBC 2026).
We are grateful to the event organizers and, in particular, to Ana Karolina Maia and Júlio Araújo who introduced us to this very interesting topic and its open problems.

This study was financed, in part, by the São Paulo Research Foundation (FAPESP), Brasil. Process Numbers \#2026/00744-0 (to I. Capozzoli), \#2024/18049-0 (to Y. Couto) and \#2026/00996-9 (to E. Junchaya).

\bibliography{references}


\end{document}